\providecommand{\tensor}{\otimes}
\providecommand{\affiliation}{\address}
\providecommand{\acknowledgments}{\section*{Acknowledgements}}
\newtheorem{theorem}{Theorem}
\newtheorem{proposition}[theorem]{Proposition}
\newtheorem{corollary}[theorem]{Corollary}
\theoremstyle{remark}
\newtheorem*{remark}{Remark}
\newcommand{\C}{\mathbb{C}}
\newcommand{\CP}{\mathbb{CP}}
\newcommand{\cH}{\mathcal{H}}
\newcommand{\Sym}{\mathrm{Sym}}
\newcommand{\unif}{{\mathrm{unif}}}
\newcommand{\GL}{{\mathrm{GL}}}
\renewcommand{\tensor}{\otimes}
\newcommand{\ket}[1]{|#1\rangle}
\newcommand{\bra}[1]{\langle #1|}
\newcommand{\defeq}{\stackrel{\mathrm{def}}{=}}
\newcommand{\multiset}[2]{\left(\kern-.15em\binom{#1}{#2}\kern-.15em\right)}
\newcommand{\ie}{\textit{i.e.}}
\newcommand{\thm}[1]{Theorem~\ref{#1}}
\newcommand{\cor}[1]{Corollary~\ref{#1}}
\newcommand{\eq}[2]{\begin{equation}\label{#1}#2\end{equation}}
\begin{document}
\title{The bosonic birthday paradox}

\author{Alex Arkhipov}
\thanks{Supported by an Akamai Foundation Fellowship.}
\affiliation{Massachusetts Institute of Technology}
\author{Greg Kuperberg}
\thanks{Partly supported by NSF grant CCF-1013079.}
\affiliation{University of California, Davis}

\begin{abstract} We motivate and prove a version of the birthday paradox
for $k$ identical bosons in $n$ possible modes.  If the bosons are in the
uniform mixed state, also called the maximally mixed quantum state, then
we need $k \sim \sqrt{n}$ bosons to expect two in the same state, which
is smaller by a factor of $\sqrt{2}$ than in the case of distinguishable
objects (boltzmannons).  While the core result is elementary, we generalize
the hypothesis and strengthen the conclusion in several ways.  One side
result is that boltzmannons with a randomly chosen multinomial distribution
have the same birthday statistics as bosons.  This last result is
interesting as a quantum proof of a classical probability theorem;
we also give a classical proof.
\end{abstract}

\maketitle

The traditional birthday paradox says that given a calendar with $n$
days, there is a significant chance (bounded away from 0) that a room
with $\Omega(\sqrt{n})$ people with uniformly random birthdays has two
with the same birthday.  Aaronson and Arkhipov \cite{AA:optics} discuss
the same paradox for randomly chosen bosons.  Here we present a different
treatment of the same problem. In fact we will present two ``paradoxes".
The first result (which Aaronson and Arkhipov derived, in a less general
form) is that although bosons prefer to have the same birthday, they
have the same asymptotic behavior in the birthday problem, up to constant
factors, as distinguishable particles (boltzmannons).  The second result
is that they have exactly the same behavior, non-asymptotically, as $n$
i.i.d. boltzmannons whose common distribution is a randomly chosen point in
the simplex of all distributions on $n$ configurations.  This leads to an
interesting result in classical probability with a quantum probability proof.

We assume that the Hilbert space for one particle is $\cH = \C^n$.
We assume a self-adjoint birthday operator
$$B:\cH \to \cH$$
with eigenvalues $1,2,\dots,n$ in some basis.  The Hilbert space of $k$
bosons is then the symmetric power
$$S^k(\cH) \cong \C^{\multiset{n}{k}},$$
using the multiset coefficient notation
\eq{e:multisetdef}{\multiset{n}{k} \defeq \binom{n+k-1}{k}.}
In the terminology used for identical particles, the states of a basis of
$\cH$ are called \emph{modes}.

In the traditional version of the classical birthday problem, we assume
the uniform distribution on all $n^k$ choices of the birthdays of the
$k$ people.  The uniform distribution $\mu_\unif(X)$ on any finite set $X$
can be characterized in either of two ways:  It is the unique distribution
with the most entropy, $\log |X|$; and the unique distribution with the
most symmetry, $\Sym(X)$.

We will consider an analogue of the uniform distribution for a quantum
system with a Hilbert space $\cH$:  the mixed state $\rho_\unif(\cH)$
whose density matrix is the scaled identity on $\cH$.  Like the classical
state $\mu_\unif$, the quantum state $\rho_\unif(\cH)$ is the unique
state on $\cH$ with the most entropy, $\log \dim \cH$; and the unique
state with the most symmetry, $U(\cH)$.  Moreover, $\rho_\unif(\cH)$
is the unique state that yields the distribution $\mu_\unif(X)$ for any
complete measurement that takes values in a set $X$.

We will use the uniform state $\rho_\unif = \rho_\unif(S^k(\cH))$ on the
joint Hilbert space of $k$ bosons.  Then, the measurement $S^k(B)$ of all
birthdays of $\rho_\unif$ yields the uniform distribution $\mu_\unif$
on configurations of $k$ \emph{unlabelled} people with $n$ possible
birthdays.  (It is also standard to refer to unlabelled balls in labelled
boxes, but we will stick to the birthday metaphor.)  Moreover, this
particular uniform state can be justified using less symmetry than the
largest available unitary group $U(S^k(\cH))$:

\begin{proposition} The state $\rho_\unif$ on $S^k(\cH)$ is the unique
state which is invariant under the unitary group $U(\cH)$.
\label{p:symmetry} \end{proposition}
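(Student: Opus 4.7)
The strategy is to apply Schur's lemma. A state $\rho$ on $S^k(\cH)$ is $U(\cH)$-invariant exactly when $S^k(U)\,\rho\,S^k(U)^* = \rho$ for every $U \in U(\cH)$, i.e., when $\rho$ commutes with the entire image of the $k$-th symmetric power representation $S^k : U(\cH) \to U(S^k(\cH))$. If $S^k$ is irreducible, then Schur's lemma forces $\rho = cI$, and the trace-one normalization pins down $c = 1/\multiset{n}{k}$, giving $\rho = \rho_\unif$.

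The substantive step is to verify that $S^k$ is irreducible as a representation of $U(\cH)$. I would do this via Schur--Weyl duality: on $\cH^{\tensor k}$ the commutant of $U(\cH)$ is generated by the permutation action of the symmetric group $S_k$, and the isotypic decomposition $\cH^{\tensor k} = \bigoplus_{\lambda \vdash k} V_\lambda \tensor M_\lambda$ pairs each irreducible $V_\lambda$ of $U(\cH)$ with an irreducible $M_\lambda$ of $S_k$. The symmetric subspace $S^k(\cH)$ is the $S_k$-fixed part of this decomposition and so equals $\bigoplus_\lambda V_\lambda \tensor M_\lambda^{S_k}$; the trivial representation of $S_k$ appears only for $\lambda = (k)$, so $S^k(\cH) = V_{(k)}$ is a single $U(\cH)$-irreducible.

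A more self-contained alternative is via highest weights: restrict to the diagonal torus in $U(\cH)$ and note that the weight $(k, 0, \ldots, 0)$ has multiplicity one in $S^k(\cH)$, realized by $\ket{1}^{\tensor k}$, from which irreducibility of the $U(\cH)$-module it generates follows by standard arguments. A yet more elementary route uses the coherent states $\ket{v}^{\tensor k}$, which are acted on transitively by $U(\cH)$ via $S^k(U)\ket{v}^{\tensor k} = \ket{Uv}^{\tensor k}$ and span $S^k(\cH)$ by a polarization identity; a short further analysis of $U(\cH)$-invariants of pairs $(u,v)$ then rules out any proper invariant subspace.

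The main obstacle is the irreducibility of $S^k$ itself; everything else --- the Schur's lemma step and the trace normalization --- is a one-line formality. I would lead with the Schur--Weyl argument, as it is short and conceptually clean, with the other approaches available as fallbacks if a more elementary exposition were required.
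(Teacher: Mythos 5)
Your proof is correct and follows the same overall strategy as the paper: reduce the statement to Schur's lemma applied to the symmetric power representation, then verify that $S^k(\cH)$ is irreducible under $U(\cH)$. The only divergence is in how irreducibility is justified. The paper simply cites the standard facts that $S^k(\C^n)$ is irreducible over $\GL(n,\C)$ and that $\GL(n,\C)$ and $U(n)$ have the same irreducible representations (the latter being the complexification of the former); you instead supply a self-contained derivation via Schur--Weyl duality, identifying $S^k(\cH)$ with the isotypic component $V_{(k)}$ paired with the trivial representation of $S_k$ inside $\cH^{\tensor k}$. That argument is sound (the $S_k$-fixed part of $V_\lambda \tensor M_\lambda$ vanishes unless $M_\lambda$ is trivial, which forces $\lambda = (k)$), and it buys a proof that works directly for the compact group $U(\cH)$ without invoking the complexification principle, at the cost of importing the full Schur--Weyl machinery where the paper gets by with a citation. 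Your normalization step and the observation that invariance is equivalent to commuting with the image of $S^k$ are both correct.
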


\begin{proof} Suppose that $\rho$ is a $U(\cH)$-invariant state on
$S^k(\cH)$, \ie, a $U(\cH)$-invariant density operator.  Schur's Lemma
says that if $V$ is an irreducible complex representation of a group $G$,
then every $G$-invariant operator on $V$ is proportional to the identity.
Thus it is sufficient (and also necessary, if either $V$ is unitary or $G$ is
compact) for $V$ to be irreducible. It is a standard fact of representation
theory \cite[\S6.1]{FH:gtm} that $S^k(\C^n)$ is an irreducible representation
of $\GL(n,\C)$.  It is another standard fact \cite[\S26.1]{FH:gtm} that
$\GL(n,\C)$ and $U(n)$ have the same irreducible representations,
since the former is the complexification of the latter.
\end{proof}

This symmetry implies that $\rho_\unif$ is the $U(\cH)$-average of any
state, since such an average must be invariant with respect to the action
of $U(\cH)$.

\begin{corollary} Putting $k$ bosons in any state $\sigma$ on $S^k(\cH)$,
and then applying a Haar-random unitary matrix in $U(\cH)$ yields
the state $\rho_\unif$.  \label{c:average} \end{corollary}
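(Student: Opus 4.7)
The plan is to reduce the corollary directly to \prop{p:symmetry} via the standard Haar-averaging argument. Let $U \in U(\cH)$ act on $S^k(\cH)$ through the symmetric power representation, and define the averaged state
$$\bar\sigma \defeq \int_{U(\cH)} U \sigma U^\dagger \, dU,$$
where $dU$ denotes the normalized Haar measure on $U(\cH)$. This is exactly the state produced by the physical procedure described: prepare $\sigma$, then apply a Haar-random $U$.

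First I would check that $\bar\sigma$ is a legitimate density operator on $S^k(\cH)$. Positivity and self-adjointness are preserved because each $U \sigma U^\dagger$ is positive and the integral is a convex combination; the trace condition $\operatorname{tr} \bar\sigma = 1$ follows from cyclicity of the trace together with the normalization of Haar measure.

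Next I would verify $U(\cH)$-invariance: for any $V \in U(\cH)$,
$$V \bar\sigma V^\dagger = \int_{U(\cH)} (VU) \sigma (VU)^\dagger \, dU = \bar\sigma,$$
using left-invariance of the Haar measure under the substitution $U \mapsto V^{-1}U$. Thus $\bar\sigma$ is a $U(\cH)$-invariant state on $S^k(\cH)$.

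Finally, \prop{p:symmetry} says that $\rho_\unif$ is the unique such state, so $\bar\sigma = \rho_\unif$, which is the claim. There is essentially no obstacle here; the only point requiring any care is making the Haar-measure manipulation precise, but it is the completely standard computation, and the substance of the corollary was already absorbed into the preceding proposition.
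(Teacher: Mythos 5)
Your proposal is correct and is exactly the argument the paper intends: the averaged state is $U(\cH)$-invariant by left-invariance of Haar measure, and \prop{p:symmetry} identifies it as $\rho_\unif$. You have merely spelled out the Haar-measure bookkeeping that the paper compresses into one sentence.
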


Aaronson and Arkhipov consider such an average for a particular choice of
$\sigma$, where $\sigma$ is the pure state
$$\ket{\psi} = \ket{1,2,3,\ldots,k}$$
in which the $k$ bosons are in distinct modes (which requires $k \le n$).
Another choice considered below is
$$\ket{\psi} = \ket{1,1,1,\ldots,1}$$
in which the bosons are all in the same mode.  There are many choices for
$\sigma$, but \cor{c:average} says that they all become the same when they
are averaged.

We will now look at the asymptotics of $j$-fold birthdays in $\rho_\unif$.
We will use the notation $f(n) \sim g(n)$ to mean that $f(n)/g(n) \to 1$,
or equivalently that $f(n) = g(n)(1+o(1))$.

\begin{theorem} Suppose that there are $k$ bosons with $n$ modes, suppose
that they are in the uniform state $\rho_\unif$, and suppose that $k
\sim cn^{(j-1)/j}$ as $n \to \infty$, for some integer $j \ge 2$ and
some constant $c > 0$.  Then the number of $j$-fold birthdays converges
in distribution to a Poisson random variable with mean $c^j$, while the
number of $(j+1)$-fold-or-more birthdays converges to 0.
\label{th:bosons} \end{theorem}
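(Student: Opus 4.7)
The plan is to invoke the observation (made just before the theorem) that measuring $S^k(B)$ on $\rho_\unif$ yields the uniform distribution on size-$k$ multisets of $[n]$, and then apply the classical method of factorial moments for Poisson convergence. Let $X_i$ be the indicator that mode $i$ contains exactly $j$ bosons, and let $N_j = X_1 + \cdots + X_n$.

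The core step is computing the $r$-th factorial moment of $N_j$. Pinning $j$ bosons into each of $r$ specified modes leaves an unconstrained size-$(k-jr)$ multiset on the other $n-r$ modes, so
\[ E\bigl[N_j(N_j-1)\cdots(N_j-r+1)\bigr] = \frac{n!}{(n-r)!} \cdot \frac{\multiset{n-r}{k-jr}}{\multiset{n}{k}}. \]
Expanding the ratio as a quotient of falling factorials,
\[ \frac{\multiset{n-r}{k-jr}}{\multiset{n}{k}} = \frac{k^{\underline{jr}} \,(n-1)^{\underline{r}}}{(n+k-1)^{\underline{jr+r}}}, \]
and using $k = o(n)$ (since $(j-1)/j < 1$), the leading behavior is $k^{jr}/n^{jr}$. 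Combined with $n!/(n-r)! \sim n^r$ and the hypothesis $k \sim c n^{(j-1)/j}$, this gives $E[N_j^{\underline{r}}] \to c^{jr} = (c^j)^r$ for each fixed $r$, matching exactly the factorial moments of $\mathrm{Poisson}(c^j)$. Convergence of every factorial moment to those of a Poisson random variable implies convergence in distribution, so the first claim follows.

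For the claim about $(j+1)$-or-more-fold birthdays, I would bound the expected count by $\sum_{l \ge j+1} E[N_l]$. The same computation with $j$ replaced by $l$ and $r=1$ gives $E[N_l] \sim k^l / n^{l-1}$, which at $l = j+1$ evaluates to $c^{j+1} n^{-1/j} \to 0$; the tail in $l$ shrinks geometrically in $k/n$. Markov's inequality then forces the count to $0$ in probability.

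The main obstacle is purely computational: obtaining uniform control of the multiset-coefficient asymptotics. The identity rewriting the ratio as a quotient of three falling factorials makes the leading-order cancellation transparent, and once that is in hand, absorbing the $k/n \to 0$ corrections into $(1+o(1))$ factors is a routine exercise for each fixed $r$.
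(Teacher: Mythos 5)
Your proposal is correct and follows essentially the same route as the paper: reduce to the uniform distribution on size-$k$ multisets, compute factorial moments as ratios of multiset coefficients, and invoke the method of moments for Poisson convergence, with the $(j+1)$-fold count killed by a first-moment bound. The only cosmetic difference is that you use ``exactly $j$'' indicators directly, whereas the paper computes with ``at least $j$'' indicators and then uses the vanishing of higher-order collisions to identify the two; both yield the same asymptotics.
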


This is the same asymptotic answer as in the case of boltzmannons, except
that the mean in that case is $c^j/j!$.  In fact, our argument in the case
of bosons is very similar to a standard argument in the case of boltzmannons.

\begin{proof} Recall that the joint measurement $S^k(B)$ of all of the
birthdays yields the uniform distribution on $k$ unlabelled people among $n$
calendar days.  The probability that the first birthday has at least $j+1$
people is
$$\multiset{n}{k-j-1}\biggm/\multiset{n}{k} \sim \frac{k^{j+1}}{(n+k)^{j+1}}$$
for fixed $j$ and $n, k \gg 1$.  Taking $k = O(n^{(j-1)/j})$ and
summing over all $n$ days, the expected number of $(j+1)$-fold-or-more
birthdays is $O(n^{-1/j})$, which vanishes as $n \to \infty$.

Meanwhile the probability that the first $\ell$ days each have at least $j$
people is
$$\multiset{n}{k-j\ell}\biggm/\multiset{n}{k} =
    \prod_{a = 0}^{j\ell-1} \frac{k-a}{n+k-a}
    \sim \frac{k^{j\ell}}{(n+k)^{-j\ell}},$$
where the approximation holds for fixed $j$ and $\ell$ and $n, k \gg 1$.
Summing over all $\binom{n}{\ell} \sim \frac{n^\ell}{\ell!}$ choices of
the $\ell$ days, we obtain that if $X$ is a random variable representing the
number of $j$-fold birthdays, then
$$E\left[ \binom{X}{\ell} \right] \sim \frac{c^{j\ell}}{\ell!}.$$
So in the limit, the $\ell$th factorial moment is $c^{j\ell}$, which the
same answer in the limit as a Poisson random variable with mean $c^j$.
To conclude the argument, the Poisson distribution is determined by its
moments.
\end{proof}

The calculation for the narrow question of the probability of at least one
repeated birthday is simpler.  The probability that all of the birthdays
are distinct is
$$\binom{n}{k}\biggm/\multiset{n}{k} =
\prod_{a=0}^{k-1} \frac{1-\frac{a}{n}}{1+\frac{a}{n}} \sim e^{-k^2/n}$$
as long as $k = o(n^{3/4})$.  The approximation is established by taking
the logarithm of both sides and then applying the Taylor series estimate
$$\ln \frac{1-x}{1+x} = -2x + O(x^3).$$

\begin{corollary} For $n$ modes, we need $k \sim \sqrt {n \ln 2}$ bosons
to expect a repeated birthday with majority probability.
\label{c:birthday} \end{corollary}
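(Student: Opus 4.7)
The plan is to read off the corollary directly from the asymptotic formula for the probability of no repeated birthday that was derived immediately before the statement. Specifically, the calculation in the preceding paragraph established that the probability that all $k$ birthdays are distinct is asymptotic to $e^{-k^2/n}$, provided $k = o(n^{3/4})$.

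With that formula in hand, the proof is essentially an equation to solve. First I would set the probability of \emph{no} repetition equal to $\frac{1}{2}$, which is the threshold between majority and minority probability of at least one repeated birthday. This gives the asymptotic equation $e^{-k^2/n} \sim \frac{1}{2}$, i.e.\ $k^2/n \sim \ln 2$, and hence $k \sim \sqrt{n \ln 2}$.

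Finally I would check the self-consistency of this answer with the hypothesis of the asymptotic formula: since $\sqrt{n \ln 2} = \Theta(n^{1/2})$ is indeed $o(n^{3/4})$, the approximation $\Pr[\text{all distinct}] \sim e^{-k^2/n}$ applies at exactly this scale, so the substitution is legitimate and the conclusion $k \sim \sqrt{n \ln 2}$ is justified. There is no real obstacle here; the only thing to be careful about is confirming that the required regime $k = o(n^{3/4})$ is not violated by the solution, which it is not.
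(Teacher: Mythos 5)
Your proposal is correct and matches the paper's own (implicit) argument exactly: the corollary is read off from the preceding asymptotic $\Pr[\text{all distinct}] \sim e^{-k^2/n}$ by solving $e^{-k^2/n} = \tfrac12$ for $k$, and your check that $k = \Theta(\sqrt{n})$ lies within the regime $k = o(n^{3/4})$ is the only verification needed.
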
 

This differs by only a constant factor from the $k \sim \sqrt {2 n \ln
2}$ people needed to expect a repeated birthday in the classical birthday
problem with distinguishable people.

\begin{remark} We should say something about independent but non-uniform
bosons.  The notion of independence for bosons is subtle.  One reasonable
and widely used notion is to first choose a distribution $\mu$ for the
birthdays of one boson, and to model it by a diagonal density matrix in
the birthday basis.  Then there is a unique distribution on $k$ bosons such
that if $k-1$ of the bosons are fixed, the conditional distribution of the
last one is given by $\mu$.  This distribution is also a thermal state,
also known as a Maxwell-Gibbs state, for non-interacting bosons.  It was
discovered by Bose and Einstein that under fairly mild assumptions on $\mu$,
almost all of the bosons have the most likely birthday.  This paradox
is commonly known as Bose-Einstein condensation.
\end{remark}

\cor{c:average} implies an interesting second model for the joint
distribution of birthdays of $k$ bosons.

\begin{theorem} The joint birthday distribution of $k$ bosons in the uniform
state $\rho_\unif$ is identical to the average of $k$ i.i.d. boltzmannons,
if their common distribution is given by a uniformly random point in the
simplex of distributions on the $n$ birthdays.
\label{th:simplex} \end{theorem}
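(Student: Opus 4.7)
The plan is to apply \cor{c:average} with a cleverly chosen starting state $\sigma$: take $\sigma$ to be the pure state $\ket{\psi} = \ket{1,1,\ldots,1}$ in which all $k$ bosons occupy the same mode. This state factors as the symmetric tensor product $\ket{1}^{\otimes k}$, and after applying a Haar-random unitary $U \in U(\cH)$ it becomes $(U\ket{1})^{\otimes k}$, an unentangled product of $k$ copies of the one-particle state $\ket{\phi} \defeq U\ket{1}$. By \cor{c:average}, averaging this over the Haar measure on $U(\cH)$ reproduces $\rho_\unif$.

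Next I would compute the birthday distribution of the product state $(U\ket{1})^{\otimes k}$ directly. Because it is unentangled across the $k$ particles, the measurement $S^k(B)$ yields $k$ classically independent and identically distributed outcomes, each drawn from the one-particle distribution $p_i = |\langle i|\phi\rangle|^2$ on $\{1,\ldots,n\}$. Combining with the previous step, the joint birthday distribution of $\rho_\unif$ equals the distribution of $k$ i.i.d.\ boltzmannons whose common distribution $(p_1,\ldots,p_n)$ is the random probability vector formed by squaring the moduli of a Haar-uniform unit vector in $\C^n$.

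The final and only nontrivial step is to identify this random probability vector with the uniform distribution on the simplex. The cleanest route is to represent a Haar-uniform unit vector by normalizing a vector whose coordinates $Z_1,\ldots,Z_n$ are i.i.d.\ standard complex Gaussians; the squared moduli $|Z_i|^2$ are then i.i.d.\ rate-$1$ exponentials, and normalizing $n$ i.i.d.\ exponentials by their sum is well known to produce the uniform (\ie, symmetric Dirichlet with all parameters equal to $1$) distribution on the simplex. This last identification is the only calculation required; everything else is bookkeeping around the symmetry furnished by \cor{c:average} and the product structure of $\ket{1,1,\ldots,1}$.
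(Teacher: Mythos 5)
Your proof is correct and its skeleton is the same as the paper's: both arguments apply \cor{c:average} to a product pure state $(\ket{\psi}\bra{\psi})^{\tensor k}$ (your $\ket{1,1,\ldots,1}$ hit with a Haar-random $U$ is exactly the paper's Haar-random $\ket{\psi}$ tensored with itself $k$ times), observe that an unentangled product state yields $k$ i.i.d.\ classical birthdays with single-particle distribution $p_i = |\langle i|\psi\rangle|^2$, and then reduce everything to the one genuinely nontrivial fact --- that the push-forward of Haar measure on the unit vectors of $\C^n$ under $\ket{\psi} \mapsto (|\psi_1|^2,\ldots,|\psi_n|^2)$ is the uniform measure on the simplex $\Delta_{n-1}$. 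Where you differ is in how that last fact is established. The paper invokes symplectic geometry: the map in question is the moment map $m:\CP^{n-1}\to\Delta_{n-1}$, and such moment maps preserve normalized measure (an Archimedes/Duistermaat--Heckman type theorem, cited from the literature). You instead give a self-contained probabilistic derivation: realize the Haar-random unit vector by normalizing a vector of i.i.d.\ standard complex Gaussians, note that the squared moduli are i.i.d.\ exponentials, and use the standard fact that exponentials normalized by their sum are Dirichlet with all parameters equal to $1$, \ie, uniform on the simplex. Your route is more elementary and needs no outside machinery; the paper's is shorter on the page and places the key fact in a broader geometric context. Both are complete proofs.
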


By combining with the induced uniform distribution on the birthday
measurement, we obtain a corollary of \thm{th:simplex} that equates two
distributions in classical probability.

\begin{corollary} Consider a town in which all families first agree to
have children according to a common distribution on the days of the year,
which itself is chosen uniformly from the simplex of all distributions.
Then the children's birthdays behave as if the children were unlabelled,
\ie, if we make a table that only gives the number of children born on
each day, then all such tables are equally likely.
\label{c:simplex} \end{corollary}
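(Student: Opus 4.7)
My plan is to derive this classical-probability corollary directly from \thm{th:simplex} together with the observation, already used in the paper, that the joint measurement $S^k(B)$ of $\rho_\unif$ yields the uniform distribution $\mu_\unif$ on the $\multiset{n}{k}$ unordered birthday configurations. \thm{th:simplex} equates two distributions on labelled sequences of $k$ birthdays: one obtained by first picking $p$ uniformly from the simplex of distributions on $n$ days and then drawing $k$ i.i.d.\ birthdays from $p$, and one obtained by measuring the birthdays of $k$ bosons in $\rho_\unif$. Both distributions are symmetric under permuting the $k$ children, so they induce the same distribution on the multiset of birthdays. By the bosonic side, that multiset distribution is $\mu_\unif$ on tables, which is exactly the claim.

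I would also include the promised direct classical proof. Conditional on $p$, the probability of a given table $(k_1,\ldots,k_n)$ with $\sum_i k_i = k$ is $\binom{k}{k_1,\ldots,k_n}\prod_i p_i^{k_i}$. Integrating against the uniform density $(n-1)!$ on the simplex and applying the standard Dirichlet integral $\int_\Delta \prod_i p_i^{k_i}\,dp = \prod_i k_i!\,/\,(k+n-1)!$, the $\prod_i k_i!$ factor cancels the denominator of the multinomial coefficient, leaving table probability $k!(n-1)!/(k+n-1)! = 1/\multiset{n}{k}$, independent of the table.

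There is no real obstacle once \thm{th:simplex} is in hand: the corollary is a line of bookkeeping, and the classical route reduces to a single standard integral. What the quantum picture illuminates is \emph{why} a uniform measure on the simplex should conspire with the multinomial coefficients to produce such a clean cancellation: by \prop{p:symmetry}, the uniform state on $S^k(\cH)$ is already forced by $U(\cH)$-invariance, and this invariance is inherited by the classical mixture over $p$, pinning down the induced table distribution to the uniform one.
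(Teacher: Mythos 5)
Your proposal is correct. The first half --- reading the corollary off from \thm{th:simplex} together with the fact that measuring $\rho_\unif$ gives the uniform distribution on the $\multiset{n}{k}$ tables, using exchangeability of both sides to pass from labelled sequences to tables --- is exactly how the paper obtains the corollary. Your classical proof, however, takes a genuinely different route from the paper's. You condition on $p$, write the probability of a fixed table as a multinomial, and integrate against the uniform density $(n-1)!$ on the simplex via the Dirichlet integral $\int_{\Delta_{n-1}}\prod_i p_i^{k_i}\,dp=\prod_i k_i!\,/\,(k+n-1)!$, so that $\prod_i k_i!$ cancels and every table has probability $k!\,(n-1)!/(k+n-1)!=1/\multiset{n}{k}$; the computation checks out. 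The paper instead gives a bijective stars-and-bars coupling: drop $n-1+k$ i.i.d.\ uniform points in $[0,1]$ and designate a uniformly random $(n-1)$-subset of them as bars; the gaps between the bars form a uniform point of the simplex, the stars are then i.i.d.\ from that distribution, and exchangeability of the $n-1+k$ points makes all $\binom{n-1+k}{k}$ interleavings, hence all tables, equally likely without evaluating any integral. The paper's coupling is precisely an answer to the question you raise at the end of \emph{why} the uniform prior conspires with the multinomial coefficients; your computation is quicker to verify and extends immediately to non-uniform Dirichlet priors (the P\'olya/Bose--Einstein family). One small caveat: your closing remark that the $U(\cH)$-invariance is inherited by the classical mixture over $p$ is loose, since the invariance is a property of the quantum average $\int(\ket{\psi}\bra{\psi})^{\tensor k}\,d\psi$ rather than of the classical mixture of multinomials; but that sentence is commentary, and your proof does not depend on it.
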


In other words, the uniform average of all multinomial distributions on
multisubsets of size $k$ in a set of size $n$, is the uniform distribution
on multisubsets.

\begin{proof}[Proof of \thm{th:simplex}] Recall that the Hilbert space
of $k$ boltzmannons is $\cH^{\tensor k}$. Consider the state $\sigma =
(\ket{\psi}\bra{\psi})^{\tensor k}$, first for some fixed choice of
$\ket{\psi} \in \cH$.  This $\sigma$ yields independently distributed
birthdays for the $k$ boltzmannons, and the distribution of each one is
given by the measurement of one copy of $\ket{\psi}$.  Meanwhile, $\sigma$
is evidently a pure symmetric state, which means that these boltzmannons are
also bosons.  By \cor{c:average}, the average of all choices of $\sigma$,
with respect to Haar measure on $U(\cH)$, is the bosonic state $\rho_\unif$.

The Haar distribution of $\ket{\psi}$, or equivalently one column of a
matrix in $U(n)$, is given by Haar measure on the manifold of pure states
$\CP^{n-1}$.  The induced distribution of the birthday measurement is
given by the moment map
$$m:\CP^{n-1} \to \Delta_{n-1}$$
to the simplex of distributions on $n$ configurations \cite[\S
6.4]{Cannas:handbook}.  This moment map preserves normalized measure
\cite[\S 6.6]{Cannas:handbook}.  Thus a random choice of $\sigma$ amounts
to a random distribution on each birthday, drawn uniformly from the simplex
of distributions.  This establishes the claim of the theorem.
\end{proof}

\thm{th:simplex} yields a quantum proof of a classical probability result,
\cor{c:simplex}.  We also obtained a classical proof of the same result.

\begin{proof}[Classical proof of \cor{c:simplex}] The argument uses a
variation of the stars-and-bars notation for multisets \cite{Feller:intro}
that is also used to prove the identity \eqref{e:multisetdef}.  Namely,
we write a star for each of the $k$ children, with $n-1$ separating bars
between the $n$ calendar days.  For example, if there are $k=4$ children
and $n=6$ birthdays, then one possible choice for all of the 
birthdays is
$$\star \star | \star || \star ||,$$
in which two children are born on the first day, one on the second day, one
on the fourth day, and none on the other days.  We first choose locations
of $n-1$ bars independently and uniformly on the unit interval $I=[0,1]$.
This separates the interval into $n$ subintervals of length
$$p_1 + p_2 + \ldots + p_n = 1,$$
and we claim that the lengths of these subintervals are given by a
uniformly random point in the simplex of distributions.  (Because, if
we first take the bars to be numbered, they are distributed according to
uniform measure on $[0,1]^{n-1}$.  Then, erasing the numbers yields the
quotient $[0,1]^{n-1}/S_{n-1}$, which is a simplex and also has uniform
measure.  Then, taking the differences of successive points to obtain
the probabilities $p_j$ is a linear isomorphism, which also preserves
uniform measure.)  Then, if each child's birth is represented by a star
which is also at a uniformly random position in $[0,1]$, the probability
of the $j$th birthday is exactly $p_j$, the length of the $j$th interval.

We note that the ordering of the stars and bars determines the number of
children with each birthday.  We claim that these multiset choices are all
equally likely, as if the children had been bosons (with no distinguishing
state other than the date of birth).  This is made clear if we equivalently
choose $n-1+k$ points independently from $I$ all at once, and then choose
a random subset of $n-1$ points to be the bars and the other $k$ points to
be the stars.   These $\multiset{n}{k} = \binom{n-1+k}{k}$ equally likely
choices exactly correspond to a multiset choice of $k$ unlabelled children
distributed among $n$ days, as claimed.
\end{proof}

We conclude with a version of the birthday paradox for fermions.

\begin{theorem}[Pauli] Given $k$ fermions in any state on the exterior power
$\Lambda^k(\cH)$, there is no chance that any two have the same birthday.
\label{th:fermion} \end{theorem}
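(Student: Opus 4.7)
The plan is to unpack the definition of the exterior power and observe that the Pauli exclusion principle is essentially built into the algebraic structure of $\Lambda^k(\cH)$. The operator $B$ has an eigenbasis $\ket{1},\ket{2},\dots,\ket{n}$ with eigenvalues $1,\dots,n$. The measurement of all $k$ birthdays is the joint measurement $\Lambda^k(B)$, and its eigenvectors are the wedge products $\ket{i_1}\wedge\ket{i_2}\wedge\cdots\wedge\ket{i_k}$, with eigenvalue list $(i_1,\dots,i_k)$.

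First I would recall the standard fact from multilinear algebra that the wedge product is alternating, so $\ket{i_1}\wedge\cdots\wedge\ket{i_k}=0$ whenever any two indices $i_a, i_b$ coincide. Equivalently, an orthonormal basis of $\Lambda^k(\cH)$ is given by the vectors $\ket{i_1}\wedge\cdots\wedge\ket{i_k}$ with $i_1<i_2<\cdots<i_k$, indexed by $k$-element \emph{subsets} of $\{1,\dots,n\}$ rather than multisubsets. In particular, $\Lambda^k(\cH)=0$ if $k>n$, which recovers Pauli's principle in its starkest form.

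Next I would conclude that for any state (pure or mixed) on $\Lambda^k(\cH)$, the Born rule assigns all probability to outcomes that are eigenvectors of $\Lambda^k(B)$, and every such eigenvector corresponds to a tuple of $k$ pairwise distinct birthdays. Therefore the probability of observing a repeated birthday is exactly zero, regardless of the input state.

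There is no genuine obstacle here: the result is a direct corollary of the definition of the fermionic Hilbert space as an exterior power, in contrast with the bosonic case where the symmetric power admits repeated indices. The only thing to be careful about is the measurement-theoretic step — spelling out that a joint eigenbasis of $\Lambda^k(B)$ consists of strictly ordered wedges — but this is routine.
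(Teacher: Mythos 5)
Your argument is correct and is precisely the standard reasoning the paper leaves implicit: it states this theorem without proof, treating it as immediate from the fact that an orthonormal eigenbasis of $\Lambda^k(B)$ consists of wedges $\ket{i_1}\wedge\cdots\wedge\ket{i_k}$ with strictly increasing indices, so every measurement outcome has distinct birthdays. Your write-up supplies exactly that missing detail, including the correct observation that the conclusion holds for arbitrary mixed states by the Born rule.
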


We leave the question of an anyonic birthday paradox, including non-abelian
anyons, as a topic for future work.

\vspace{\baselineskip}

\acknowledgments

The authors would like to thank Scott Aaronson for suggesting the problem.

\bibliographystyle{hamsplain}

% \bibliography{bib/qp,bib/books}

\providecommand{\bysame}{\leavevmode\hbox to3em{\hrulefill}\thinspace}
\providecommand{\MR}{\relax\ifhmode\unskip\space\fi MR }
% \MRhref is called by the amsart/book/proc definition of \MR.
\providecommand{\MRhref}[2]{%
  \href{http://www.ams.org/mathscinet-getitem?mr=#1}{#2}
}
\providecommand{\href}[2]{#2}

\end{document}